\newtheorem{theorem}{Theorem}
\newtheorem{lemma}{Lemma}
\newtheorem{remark}{Remark}
\def\bm#1{\mathbf{#1}}
\title{	\huge Optimizing Throughput in a MIMO System with a Self-sustained Relay and Non-uniform Power Splitting\\ % The assignment title
%\horrule{2pt} \\[0.2cm] % Thick bottom horizontal rule
}
\author{Rafia Malik and Mai Vu\\
Department of Electrical and Computer Engineering, Tufts University, Medford, MA, USA\\
Email: rafia.malik@tufts.edu, mai.vu@tufts.edu} % Your name
\date{\normalsize{March 17, 2017}} % Today's date or a custom date
\begin{document}

\maketitle % Print the title

\begin{abstract}
We present a novel approach to maximizing the transmission rate in a MIMO relay system, where all nodes are equipped with multiple antennas and the relay is self-sustained by harvesting energy. We formulate an optimization problem and use dual-characterization to derive a closed-form solution for the optimal power splitting ratio and precoding design. We propose an efficient primal-dual algorithm to jointly optimize the power allocation at source and relay for transmission and the power splitting at relay for energy harvesting, and show that using non-uniform power splitting is optimal. Numerical results demonstrate the significant rate gain of non-uniform power splitting over traditional uniform splitting especially at low source transmit power. We also analyze our algorithm numerically and demonstrate its efficiency at reducing the run-time by several orders of magnitudes compared to a standard solver, \textcolor{blue}{and existing algorithms in literature}.
\end{abstract}

\section{Introduction}

%------------------------------------------------
With the rapid evolution of wireless networks in recent years, energy efficiency is now an important figure of merit for the design of next generation communication systems~\cite{Krikidis2014}. For communication networks, we can envision future networks to employ relays capable of providing cooperation in terms of both information and energy.  Prior works in joint information and energy transfer have considered systems with multiple antennas at the transmitting nodes and single antennas at the receiving nodes~\cite{Rui2015}\cite{Qiao2018}, or investigated systems with multiple antennas only at the relay node~\cite{Mohammadi2015}. While such systems can provide throughput gains over single-antenna ones, a MIMO system with multiple antennas at all nodes can further enhance the system's performance not only in terms of the achievable rate, but also the harvested energy. \textcolor{blue}{New research efforts also explore energy harvesting and communication using massive MIMO systems~\cite{Xu2018}.}

Two practical designs for MIMO channels with energy harvesting and information decoding exist in literature - Time Switching (TS) and Power Splitting (PS)~\cite{Alsharoa2017}~\cite{Rui2013}. In general, power splitting has reduced transmission delay and increased spectral efficiency compared to time switching. To simplify the system model, a common assumption in the literature for power splitting is that it is uniform among all antennas at the relay node~\cite{Alouini2016}~\cite{Khandaker2016}. While uniform power splitting can achieve the same rate as non-uniform splitting for systems with multiple antennas at a single node~\cite{Rui2013}, for a MIMO system, it is non-optimal. 

In this letter, we present a novel optimization problem to characterize a two hop, decode-and-forward MIMO relay system with simultaneous wireless information and power transfer using non-uniform power splitting. We formulate the problem to jointly optimize both the pre-coding designs at source and relay and the power splitting ratios at relay. Using the Lagrangian dual problem and a sub-gradient method, we solve the resulting convex optimization problem through a proposed primal-dual algorithm. We contrast the performance of our algorithm with existing algorithms, and compare the throughput of our proposed non-uniform power splitting scheme to the traditional uniform splitting. \textcolor{blue}{This comparison is done for an increasing number of antennas in the MIMO system, and the extremely low convergence time of our algorithm corroborates how it may be computationally feasible to apply it to massive MIMO systems.}

%----------------------------------------------------------------------------------------
%	SYSTEM MODEL AND PROBLEM FORMULATION
%----------------------------------------------------------------------------------------
\section{System Model}
\subsection{Channel Model}
A half-duplex, decode-and-forward two-hop MIMO relay system is considered, where all three nodes, source, relay and destination, are equipped with multiple antennas. For this letter, we assume that the direct transmission link suffers from significant path loss and fading, such that the relay channel is always used for data transmission from source to destination. The source S, relay R and destination D are equipped with $N_s$, $N_r$ and $N_d$ antennas respectively and the S-R and R-D Rayleigh fading channels are modeled by matrices $\textbf{H} \in \mathbb{C}^{N_r \times N_s}$ and $\textbf{G} \in \mathbb{C}^{N_d \times N_r}$ respectively. We assume a quasi-static fading channel with local channel state information (CSI) available to both the transmit and receive nodes to reveal theoretical bounds of the problem considered.  

\subsection{Relay Model}
The relay is a self-sustained node, employing a harvest-use policy. It is equipped with two rechargeable batteries which store harvested energy and supply power using a TS scheme which caters for the half-duplicity of energy transfer.

The power of the received signal at each antenna of the relay is divided for Energy Harvesting (EH) and Information Decoding (ID), according to the power splitting ratio $\rho_i \in (0,1) \ \forall i \in [1,N_r]$, and we define $\boldsymbol{\Lambda_{\rho}} = \textbf{diag} (\rho_1 ... \rho_{N_r})$. The received signal at the relay is given by $\mathbf{y_{ID}}$ for information decoding and $\mathbf{y_{EH}}$ for energy harvesting as   
\begin{align}
&\mathbf{y_{ID} = (I- \Lambda_{\rho})}^{1/2} \mathbf{Hx_s + z_r}, \ \mathbf{y_{EH} }= \boldsymbol{\Lambda_{\rho}}^{1/2} \mathbf{Hx_s}
\end{align}
where $\mathbf{x_s}$ is the source transmitted signal and $\mathbf{z_r} \sim \mathcal{CN}(\bm 0,\sigma_r^2\bm I)$ denotes the traditional receiver noise. Here we adopt the standard assumption of perfect power splitter~\cite{Rui2013}, and hence include no noise term for $\mathbf{y_{EH}}$. 

Assuming that on average the energy consumed is equal to the energy harvested to prevent energy-outages in the data transmission phase~\cite{Ulukus2015}, the transmission power for the relay, $P_r$, is then equal to the harvested power, $P_h$. The relay transmission power can be written as
\begin{equation}\label{Pr}
P_r = P_h = \frac{E_h}{T/2} = \frac{\eta \text{tr} \Big(\mathbf{\Lambda_{\rho} HW_sH^\ast}\Big)}{T/2}
\end{equation}
where $\eta \triangleq \eta_c \eta_d$, where $\eta_c \in [0,1]$ is the efficiency for energy conversion (from RF to DC) and battery charging, $\eta_d \in [0,1]$ is the utilizing efficiency for battery discharging and $\boldsymbol{W_s}$ is the source covariance matrix. For this letter, similar to~\cite{Rui2013} we assume without loss of generality that $\eta = 1$, unless otherwise stated. We assume a half block (time slot) of unit duration $T_s/2 = 1$, and hence use the expressions for harvested energy $E_h$ and harvested power $P_h$ interchangeably in this letter.  

\textcolor{blue}{The regenerative relay employs a decode-forward multi-hop relaying scheme~\cite{Gamal2011}. The relay recovers the message received from the sender
in each block and re-transmits it in the following block. The signal received at the destination is as given below
\begin{equation}
\mathbf{y_d = Gx_r + z_d}, 
\end{equation}
where $\mathbf{z_d} \sim \mathcal{CN}(\bm 0,\sigma_d^2 \bm I)$ is the additive noise at the destination. The average transmit power constraint on the relay is related to the harvested power in (2) as $\text{tr}\Big(\mathbf{W_r}\Big) \leq P_r$. The receiver then decodes on the signal received from the relay to recover the information transmitted from the source.}

\subsection{Achievable Transmission Rate}
An achievable rate for the multi-hop relay channel is given as\cite[p.~387]{Gamal2011}.
\begin{align*}
&R = \max_{p(x_s) p(x_r)} \min\{I(X_s; Y_{ID} \vert X_r), I (X_r; Y_d)\}\\
&\hphantom{R} = \min \{\max_{p(x_s)} R_{S-R}, \max_{p(x_r)} R_{R-D} \}
\end{align*}
where the second expression follows from application of the first expression to the considered two-hop cascaded S-R and R-D channel and $R_{S-R}$ and $R_{R-D}$ are achievable rates of the first and second hop, respectively. Using the signal model in (1) and (3), assuming optimal Gaussian transmit signals, the achievable rate for S-D transmission, in bps/Hz, is then 
\textcolor{blue}{\begin{align*}\label{csb}
&R\big (\boldsymbol{\Lambda_{\rho}},\mathbf{W_s,W_r} \big ) = {\text{min}}\left \{ \underset{{\mathbf{W_s}} } \max \ R_{S-R}, \underset{{\mathbf{W_r}} } \max \ R_{R-D} \right \}  \tag{4}\\
&\medmath{= \text{min} \left \{ \max_{\mathbf{W_s}\boldsymbol{\Lambda_{\rho}}}\frac{1}{2} \text{log}_2 \left | \mathbf{I} + \frac{\mathbf{(I - \Lambda_{\rho})HW_sH^\ast}}{\sigma_r^2} \right | , \max_{\mathbf{W_r}}\frac{1}{2} \text{log}_2 \left | \mathbf{I} + \frac{\mathbf{GW_rG^\ast}}{\sigma_d^2} \right | \right \}}
\end{align*}}
\section{Throughput Optimization Problem formulation}
The maximization of the achievable end-to-end transmission rate is formulated as an optimization problem given below, where (5b) and (5c) are the transmit power constraints at the source and relay, respectively.
\begin{align*}
&(\text {P}):\underset {\boldsymbol{\Lambda_{\rho}}, \boldsymbol {W}_{s},\boldsymbol {W}_{r}}{\max } ~ R\left ({\boldsymbol{\Lambda_{\rho}}, \boldsymbol {W}_{s},\boldsymbol {W}_{r}}\right ), \tag{5a}\\
&\hphantom {(\text {P}1):}\text {s.t.}~ \mathop {\mathrm {{\text{tr}}}}\nolimits \left ({\boldsymbol {W}_{s}}\right ) \leq P_{s}, \tag{5b}\\
&\hphantom {(\text {P}1):\text {s.t.}~} \mathop {\mathrm {{\text{tr}}}}\nolimits \left ({\boldsymbol {W}_{r}}\right ) \leq P_{r}, \tag{5c}\\
&\hphantom {(\text {P}1):\text {s.t.}~}\boldsymbol {W}_{s} \succeq 0,\quad \boldsymbol {W}_{r} \succeq 0. \tag{5d}
\end{align*}

Here $R\left ({\boldsymbol{\Lambda_{\rho}}, \boldsymbol {W}_{s},\boldsymbol {W}_{r}}\right )$ is the end to end transmission rate, as defined in~(\ref{csb}), where each hop, S-R and R-D, is a point to point MIMO channel. Therefore, the maximum rate for the Gaussian vector channel in each hop is achieved using spatial multiplexing~\cite{Molisch2012}. With CSI at the transmitters, the optimal source covariance matrix has the form $\mathbf{W_s^\star = V_H \Lambda_s V_H^\ast}$, where $\mathbf{V_H}$ is obtained from the Singular Value Decomposition (SVD) of the channel matrix for the information decoding receiver at the relay, $\mathbf{(I - \Lambda_{\rho})H}$, and $\mathbf{\Lambda_s} = \textbf{diag } (p_1 ... p_{K_1})$, with the diagonal elements obtained from water-filling~\cite{Cover2012}. Similarly, for the R-D channel, $\mathbf{W_r^\star = V_G \Lambda_r V_G^\ast}$, with $\mathbf{\Lambda_r} = \textbf{diag }(q_1 ...q_{K_2})$ and $\mathbf{V_G}$ is obtained from the SVD of the R-D channel matrix $\mathbf{G}$. Here $K_1 \text{ and } K_2$ are the number of active channels corresponding to the non-zero singular values of the channel matrices $\bm H \text{ and } \bm G$, respectively.

We now rewrite the optimization problem (P), as the equivalent problem (P-eq) given below. 
\begin{align*}\label{P_new}
&(\text {P-eq}) : ~\underset {R,\boldsymbol{\rho, p, q}}{\max } ~ R \tag{6}\\
&\hphantom {(\text {P}) : ~}\text {s.t.}~ R \leq \frac{1}{2} \sum_{i=1}^{K_1} \log_2(1+(1-\rho_i)p_i\lambda_{H,i}) \tag{6a}\\
&\hphantom {(\text {P}) : ~\text {s.t.}~} R \leq \frac{1}{2} \sum_{i=1}^{K_2} \log_2(1+q_i\lambda_{G,i}) \tag{6b} \\ 
&\hphantom {(\text {P}) : ~\text {s.t.}~} \sum_{i=1}^{K_1} p_i \leq P_s \tag{6c} \\
&\hphantom {(\text {P}) : ~\text {s.t.}~} \sum_{i=1}^{K_2} q_i \leq \sum_{i=1}^{K_1} \rho_i p_i \lambda_{H,i} \tag{6d}
\end{align*}
\textcolor{blue}{Here $\lambda_{H,i}$ and $\lambda_{G,i}$ are the eigenvalues for the S-R and R-D channels respectively.} Implicit constraints not mentioned in (P-eq) are $p_i \geq 0 \ \forall \ i, q_j \geq 0 \ \forall \ j$ and $ \ 0 \leq \rho_k \leq 1 \ \forall \ k$. We impose these constraints as boundary conditions in our algorithm later on. Since the noise power is normalized to unity, we commonly refer to the power constraint $P_s$ in (6c) as the SNR.
\begin{lemma}
The optimization problem (P-eq) is jointly convex in the optimizing variables.
\end{lemma}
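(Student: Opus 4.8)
The plan is to exhibit (P-eq) as a convex program after one exact reparametrization, namely a linear objective maximized over an intersection of convex sets. First I would handle the parts already in standard form. The objective is the scalar $R$, which is linear. Constraint (6c) and the implicit constraints $p_i \ge 0$, $q_j \ge 0$, $0 \le \rho_k \le 1$ are linear inequalities and hence cut out polyhedra. For (6b), each summand $\log_2(1 + q_i\lambda_{G,i})$ is the composition of the concave scalar map $t \mapsto \log_2(1+t)$ with the affine map $q_i \mapsto q_i\lambda_{G,i}$, so it is concave in $q_i$; a sum of concave functions is concave, and $\{(R,\{q_i\}) : R \le \tfrac{1}{2}\sum_i \log_2(1+q_i\lambda_{G,i})\}$ is the hypograph of a concave function, which is convex.

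The obstruction is the pair (6a) and (6d), which contain the bilinear products $(1-\rho_i)p_i$ and $\rho_i p_i$. As written these are not convex constraints: the map $(\rho_i,p_i) \mapsto \log_2(1+(1-\rho_i)p_i\lambda_{H,i})$ is not jointly concave in general, its Hessian being indefinite precisely when $(1-\rho_i)p_i\lambda_{H,i} < 1$, i.e. in the low source-power regime the letter emphasizes, and $(\rho_i,p_i)\mapsto \rho_i p_i$ is neither convex nor concave. I would remove the bilinearity by introducing, for each stream $i$, the variable $\tilde p_i := (1-\rho_i)p_i$, the power carried by the $i$-th information-decoding sub-stream, and eliminating $\rho_i$ through $\rho_i = 1 - \tilde p_i/p_i$. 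Then $\rho_i p_i = p_i - \tilde p_i$, so (6a) becomes $R \le \tfrac{1}{2}\sum_i \log_2(1 + \tilde p_i\lambda_{H,i})$, which is concave in $\tilde p_i$ by the same composition argument used for (6b); (6d) becomes $\sum_j q_j \le \sum_i (p_i - \tilde p_i)\lambda_{H,i}$, which is linear; (6c) is unchanged; and the box constraint $0 \le \rho_i \le 1$ turns into $0 \le \tilde p_i \le p_i$, again linear.

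It then remains to check that this substitution is an \emph{exact} reformulation and not a relaxation: $(\{\rho_i\},\{p_i\}) \mapsto (\{\tilde p_i\},\{p_i\})$ is a bijection on the region where every $p_i > 0$, and the corner $p_i = 0$ forces $\tilde p_i = 0$ while leaving $\rho_i$ absent from every constraint, so the feasible sets and optimal values coincide. In the variables $(R,\{p_i\},\{\tilde p_i\},\{q_j\})$ the objective is linear and each constraint defines a convex set, so the feasible region, being their intersection, is convex and (P-eq) is a convex optimization problem. The main obstacle is thus the bilinearity in (6a) and (6d); once it is circumvented by the substitution, the only remaining delicate point is verifying that the reformulation is exact at the $p_i = 0$ boundary, and everything else is the routine bookkeeping that affine precomposition preserves concavity and that hypographs of concave functions (equivalently, sublevel sets of convex functions) are convex.
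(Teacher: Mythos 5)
Your proof is correct, but it takes a genuinely different route from the paper's. The paper argues directly in the original variables $(R,\boldsymbol{\rho},\bm p,\bm q)$: it notes that $g_1(\rho_i,p_i)=1+(1-\rho_i)\lambda_{H,i}p_i$ and $g_2(\rho_i,p_i)=\rho_i\lambda_{H,i}p_i$ have convex superlevel sets on $\mathbf{R}_{+}^2$, hence are quasi-concave, that composition with the nondecreasing $\log$ preserves quasi-concavity, and concludes convexity of the feasible region from there. You instead eliminate the bilinearity by the substitution $\tilde p_i=(1-\rho_i)p_i$, which turns (6a) into the hypograph of a genuinely concave function, (6d) into a linear inequality, and the box constraint on $\rho_i$ into $0\le\tilde p_i\le p_i$, and you verify the reformulation is exact (including the $p_i=0$ corner). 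The comparison is instructive, because your route is the more watertight one: quasi-concavity of $g_1$ gives convexity of each fixed-level superlevel set $\{g_1\ge t\}$, but (6a) couples $(\rho_i,p_i)$ to the \emph{free} variable $R$, so what is needed is convexity of the hypograph, which requires concavity rather than quasi-concavity; moreover the right-hand side of (6d) is a \emph{sum} of quasi-concave terms, and such sums need not be quasi-concave. Your Hessian computation (indefinite precisely when $(1-\rho_i)p_i\lambda_{H,i}<1$) confirms that (6a) is not jointly concave in the original variables, so the lemma's literal claim of joint convexity holds only after your reparametrization; what your argument establishes cleanly is the substantive fact the paper actually uses downstream, namely that (P-eq) is equivalent to a convex program, so that strong duality and the KKT derivation of Theorem~1 are justified. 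What the paper's approach buys in exchange is brevity and keeping the physically meaningful splitting ratios $\rho_i$ explicit, which is convenient for stating the closed-form solution (9).
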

\begin{proof}
While it can be readily seen that the objective function $R$ is linear, constraint (6b) is affine in $R$ and convex in $q_i$, constraint (6c) is affine in $p_i$, however, joint convexity for (6a) and (6d) in the optimizing variables $(p_i, q_i, \rho_i)$ needs to be established. For constraints (6a) and (6d), the left hand side is linear in $R$ and $q_i$ respectively, so we consider the right hand side. For (6a) we define $g_1(\rho_i, p_i) = 1 + (1 - \rho_i) \lambda_{H,i} p_i$ which is neither convex nor concave, since its Hessian is indefinite with eigenvalues, $\pm \lambda_{H,i}$. The superlevel sets $\{(\rho_i, p_i) \in \mathbf{R}_{+}^2), g_1(\rho_i p_i) \geq t \}$, are convex $\forall \ t$, which makes $g_1(\rho_i p_i)$ a quasi-concave function~\cite{Boyd2004}. Applying the implicit constraints; $\rho_i, p_i \geq 0$ then $\textbf{dom }g_1(\rho_i, p_i) \equiv \mathbf{R}_{+}^2$. The composition function, $f = h \circ g_1$ in constraint (6a), of the non-decreasing function $h(x) = \log(x)$, and quasi-concave function $g_1(\rho_i, p_i)$, is therefore quasiconcave. Similarly for (6d), we define $g_2(\rho_i, p_i) = \rho_i \lambda_{H,i} p_i$, which is quasi-concave with convex superlevel sets in $\mathbf{R}_{+}^2$. The problem then is a maximization of a convex function, over convex sets and convex sub-level sets, and is therefore a convex optimization problem.
\end{proof}
\subsection{Dual Problem Formulation}
We use the Lagrangian dual method to solve for the optimal solution of the primal problem (P-eq), where the Lagrangian is formulated as given below
\begin{align*}
&\mathcal{L}(R,\boldsymbol{\rho},\bm p,\bm q,\alpha,\beta,\nu,\mu) = R - \alpha(R-R_1) - \beta(R-R_2) \\
&\hphantom{\mathcal{L}(R,\rho,\bm p,\tilde{\bm p} }- \nu\left (\sum_{i=1}^{K_1} p_i - P_s\right )- \mu\left (\sum_{i=1}^{K_2} q_i - \sum_{i=1}^{K_1} \rho_i p_i \lambda_{H,i}\right )
\end{align*}
\textcolor{blue}{Here $\alpha, \beta, \nu, \mu$ are the dual variables associated with constraints (6a)-(6d) respectively.} Since $\mathcal{L}$ is a linear function of $R$ we use the optimality condition and set $\nabla_R \mathcal{L} = 0$, which gives us $1-\alpha-\beta = 0 \implies \beta = 1 - \alpha$. This is also intuitive, since the transmission rate, $R$, is equal to the minimum of the rate of the two hops, and is equal to either $R_1$ or $R_2$ or both when $R_1 = R_2$. Substituting $\beta = 1-\alpha$, we can rewrite the Lagrangian as,
\begin{align*}\label{Lagrangian}
&\mathcal{L}(\boldsymbol{\rho},\bm p,\bm q,\alpha,\nu,\mu) = \alpha R_1 + (1-\alpha)R_2- \nu\left (\sum_{i=1}^{K_1} p_i - P_s\right ) \\
&\hphantom{\mathcal{L}(\rho,\bm p,\bm q,\alpha,\nu,\mu) = }- \mu\left (\sum_{i=1}^{K_2} q_i -  \sum_{i=1}^{K_1} \rho_i p_i \lambda_{H,i}\right ) \tag{7}
\end{align*}
The Lagrange dual function of (P) can be defined as $g(\alpha, \nu, \mu) = \max \mathcal{L}(\boldsymbol{\rho, p, q},\alpha,\nu,\mu)$ with the dual problem defined as P-Dual $= \underset{\alpha, \nu, \mu \geq 0}\min g(\alpha, \nu, \mu)$. Since the problem is convex, and Slater's condition is satisfied, the duality gap is zero, which implies that both the primal and dual variables can be solved for as a primal-dual pair,$(p_i^\star, q_i^\star, \rho_i^\star,\alpha^\star, \nu^\star, \mu^\star)$.
\subsection{Optimal Solution}
\begin{theorem}
The optimal power allocation, $\bm p$ and $\bm q$ in the first and second hop respectively, and the optimal power splitting ratio, $\boldsymbol{\rho}$, can be obtained in closed-form in terms of the dual variables as
\begin{align*}
&p_i^\star = \Bigg ( \frac{\alpha}{2\nu - 2\mu \rho_i \lambda_{H,i}} - \frac{1}{(1-\rho_i)\lambda_{H,i}} \Bigg )^+ \tag{8a}\\
&q_i^\star = \Bigg ( \frac{1-\alpha}{2\mu} - \frac{1}{\lambda_{G,i}} \Bigg )^+ \tag{8b}\\
&\rho_i^\star = \Bigg [ 1 - \frac{1}{p_i \lambda_{H,i}} \bigg ( \frac{\alpha}{2\mu} - 1\bigg ) \Bigg ]_0^1 \tag{9}
\end{align*}
\end{theorem}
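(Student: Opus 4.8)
The plan is to characterize the primal optimum through the KKT conditions, which are both necessary and sufficient here: Lemma~1 gives joint convexity and Slater's condition holds (as already noted), so the saddle point $(\bm p^\star,\bm q^\star,\boldsymbol{\rho}^\star,\alpha^\star,\nu^\star,\mu^\star)$ is pinned down exactly by stationarity, primal/dual feasibility, and complementary slackness. Since the Lagrangian has already been reduced to the form (7) by using $\nabla_R\mathcal L=0 \Rightarrow \beta=1-\alpha$, what remains is to maximize $\mathcal L(\boldsymbol{\rho},\bm p,\bm q,\alpha,\nu,\mu)$ over the primal variables for fixed duals, subject only to the implicit box constraints $p_i\ge 0$, $q_i\ge 0$, $0\le\rho_i\le 1$.

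First I would locate an interior stationary point by setting the partial derivatives of $\mathcal L$ to zero. Differentiating $R_2$ in $q_i$ gives $\tfrac12(1-\alpha)\lambda_{G,i}/(1+q_i\lambda_{G,i})-\mu=0$ (absorbing the $\ln 2$ from $\log_2$ into the dual scaling), which rearranges directly to the water-filling form $q_i=(1-\alpha)/(2\mu)-1/\lambda_{G,i}$, i.e.\ (8b). Differentiating in $p_i$, both $R_1$ and the harvested-power term $\mu\sum_j\rho_j p_j\lambda_{H,j}$ contribute, yielding $\tfrac12\alpha(1-\rho_i)\lambda_{H,i}/(1+(1-\rho_i)p_i\lambda_{H,i})-\nu+\mu\rho_i\lambda_{H,i}=0$; clearing the denominator gives a linear equation in $p_i$ whose solution is (8a). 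Differentiating in $\rho_i$, the common factor $p_i\lambda_{H,i}$ cancels and leaves $\mu=\tfrac12\alpha/(1+(1-\rho_i)p_i\lambda_{H,i})$, hence $1+(1-\rho_i)p_i\lambda_{H,i}=\alpha/(2\mu)$, which solved for $\rho_i$ gives (9).

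Next I would reinstate the box constraints. Because $\mathcal L$ is concave in each primal variable when the others are held fixed (consistent with the quasiconcavity/convexity structure used in Lemma~1), the constrained maximizer is the Euclidean projection of the unconstrained stationary point onto the feasible interval: projection onto $[0,\infty)$ produces the $(\cdot)^+$ in (8a)--(8b), and projection onto $[0,1]$ produces the $[\,\cdot\,]_0^1$ in (9). Equivalently, introducing the (implicit) multipliers for $p_i,q_i\ge 0$ and $0\le\rho_i\le 1$ and applying complementary slackness forces a variable to its active bound exactly when the free stationary value would violate that bound, which is what the clipping notation encodes.

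The main obstacle I anticipate is the mutual coupling of (8a) and (9): $p_i^\star$ appears in $\rho_i^\star$ and $\rho_i^\star$ appears in $p_i^\star$, so these are not decoupled closed forms but a coupled fixed-point system in $(p_i,\rho_i)$ parametrized by $(\alpha,\nu,\mu)$. I would argue the system is consistent at the optimum --- substituting the $\rho_i$-stationarity identity $1+(1-\rho_i)p_i\lambda_{H,i}=\alpha/(2\mu)$ back into the $p_i$ relation shows the two agree --- and note that it is precisely this coupling, resolved jointly with the dual updates, that the subsequent primal--dual algorithm is built to handle. A secondary point to verify is positivity of the water-level denominator $2\nu-2\mu\rho_i\lambda_{H,i}$, which must hold for a finite maximizer; when it fails the maximizer is at $p_i^\star=0$ and the $(\cdot)^+$ operator takes over.
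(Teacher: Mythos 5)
Your proposal follows exactly the paper's (omitted) argument: apply the KKT stationarity conditions $\nabla_{p_i}\mathcal L=0$, $\nabla_{q_i}\mathcal L=0$, $\nabla_{\rho_i}\mathcal L=0$ to the reduced Lagrangian (7) and enforce the box constraints via the $(\cdot)^+$ and $[\,\cdot\,]_0^1$ projections, and your three derivative computations each check out against (8a), (8b), and (9). The only caveat is your closing consistency claim: substituting $1+(1-\rho_i)p_i\lambda_{H,i}=\alpha/(2\mu)$ into the $p_i$-stationarity condition actually forces $\nu=\mu\lambda_{H,i}$, so both conditions can hold at an interior point for at most one eigenmode in general --- for the rest, one of $p_i,\rho_i$ sits at a bound, which is precisely why the clipping and the iterative primal--dual updates are needed.
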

\begin{proof}
Obtained directly through KKT conditions by using~(\ref{Lagrangian}) and setting $\nabla_{p_i} \mathcal{L} = 0$, $\nabla_{q_i} \mathcal{L} = 0$ and $\nabla \mathcal{L}_{\rho_i} = 0$, respectively. Details omitted due to space limitation. Here $(x)^+ = \max(x,0)$, and $[x]_0^1 = \max (\min (x,1),0)$ to ensure implicit constraints $p_i, q_i \geq 0$ and $0 \leq \rho_i \leq 1$ respectively.  
\end{proof}
\begin{remark}
We see that $q_i$ in (8b) is as per conventional water-filling with a constant power level, however, $p_i$ in (8a) has varying power levels for the optimized power allocation. These power levels are varied according to the channel eigenmodes and PS ratios, to achieve a tradeoff between ID and EH. 
\end{remark}
\begin{remark}
In the expression for $\rho_i^\star$ in (9), we see that the term $\frac{1}{\lambda_{H,i} p_i}$ is not constant, therefore the optimal PS ratio is non-uniform. For comparison to the uniform PS scheme in the next section, we evaluate $\nabla \mathcal{L}_{\rho} = 0$ while setting $\rho_i = \rho \ \forall i$. 
\end{remark}
\subsection{Primal Dual Algorithm}
The optimal primal solution in terms of the dual variables in Theorem 1 forms a basis for an efficient primal-dual algorithm to solve (P-eq). Using this optimal primal solution, we obtain the dual function, $g(\alpha, \nu, \mu)$. The problem then reduces to minimizing the dual function in terms of the dual variables. We can use a subgradient based method to solve for the dual minimization problem, where the subgradient terms for the dual variables are as given below. 
\begin{align*}
&\Delta \alpha = \frac{1}{2} \sum_{i=1}^{K_1} \log_2(1+(1-\rho_i)p_i\lambda_{H,i}) - \frac{1}{2} \sum_{i=1}^{K_2} \log_2(1+q_i\lambda_{G,i}) \\
&\Delta \nu = P_s - \sum_{i=1}^{K_1} p_i \ \ \ \ \Delta \mu =  \sum_{i=1}^{K_1} \rho_i p_i \lambda_{H,i} - \sum_{i=1}^{K_2} q_i \tag{10}
\end{align*}
For our implementation, we use the shallow cut ellipsoid method~\cite{Boyd2004}. From the expressions for $p_i^\star$ and $\rho_i^\star$ in (8a) and (9), we see that they are both interdependent. Therefore we initialize the algorithm with  $p_{i,0} \geq 0$, and $0 \leq \rho_{i,0} \leq 1$. As the optimal values for the dual variables are reached using Algorithm 1, the values for the primal variables also converge to their respective optimal values by strong duality.
%%%%%%%%%%%%% BEGIN ALGORITHM %%%%%%%%%%%%%%%%%%%%%%%%%%%%
\begin{algorithm}[t]
\caption{Solution for Rate Optimization Problem}
\textbf{Given:} Channel Matrices, $\boldsymbol{H, G, F}$. Precision, $\epsilon_0$\\
\textbf{Initialize:} Dual variables, $\alpha, \nu, \mu$. Primal variables, $\boldsymbol{p, \rho}$, \\Ellipsoid shape matrix, $P$.\\
\textbf{Begin Algorithm}
\begin{itemize}[leftmargin=*]
\item Use closed form expression in (9) to find $\rho_i^\star$, if $\rho_i^\star$ is $\notin$ (0,1), use boundary conditions, \\
$\circ$ If $\rho_i < 0 \implies \text{set } \rho_i^\star = \epsilon$ \\
$\circ$ Elseif $\rho_i > 1 \implies \text{set } \rho_i^\star = 1 - \epsilon$, where $\epsilon \to 0, \epsilon \ne 0$
\item Waterfill to find $p_i^\star$ from (8a) and $q_i$ from (8b)
\item Check $g(\alpha, \nu, \mu) = \mathcal{L}(\boldsymbol{\rho^\star, p^\star, q^\star},\alpha,\nu,\mu)$\\
$\circ$ If dual variables converge with required precision, \textbf{stop}\\
$\circ$ Else, find subgradients in (10), update dual-variables using ellipsoid method, \textbf{continue}
\end{itemize}
\textbf{End Algorithm}
\end{algorithm}
%%%%%%%%%%%%% END ALGORITHM %%%%%%%%%%%%%%%%%%%%%%%%%%%%
\section{Numerical Results and Analysis}
In this section, we show the performance of our proposed method for optimal transmission in a MIMO system. We analyze our algorithm's convergence and also present results on the optimality of our non-uniform power splitting scheme. For simulations, path loss exponent $\gamma = 3.2$, noise floor $N_0 = -100$dBm; $d_{sr} = 2\text{m and }d_{sd} = 10\text{m}$ are the distances between S-R and R-D respectively. We show the results for NxNxN MIMO system(s) where $N_s=N_r=N_d=N$. Numerical results are averaged over 5000 independent channel realizations, where each element of the channel matrices, $\mathbf{H}$ and $\mathbf{G}$, is generated as $x_{ij} = (1/d)^{\gamma} u _{ij}$ with $u_{ij}\sim \mathcal{CN} (0,1)$, $d$ being the respective distance.
\begin{figure}[b]
\begin{center}
\textcolor{blue}{\fboxrule=2pt\fbox{\includegraphics[scale=0.4]{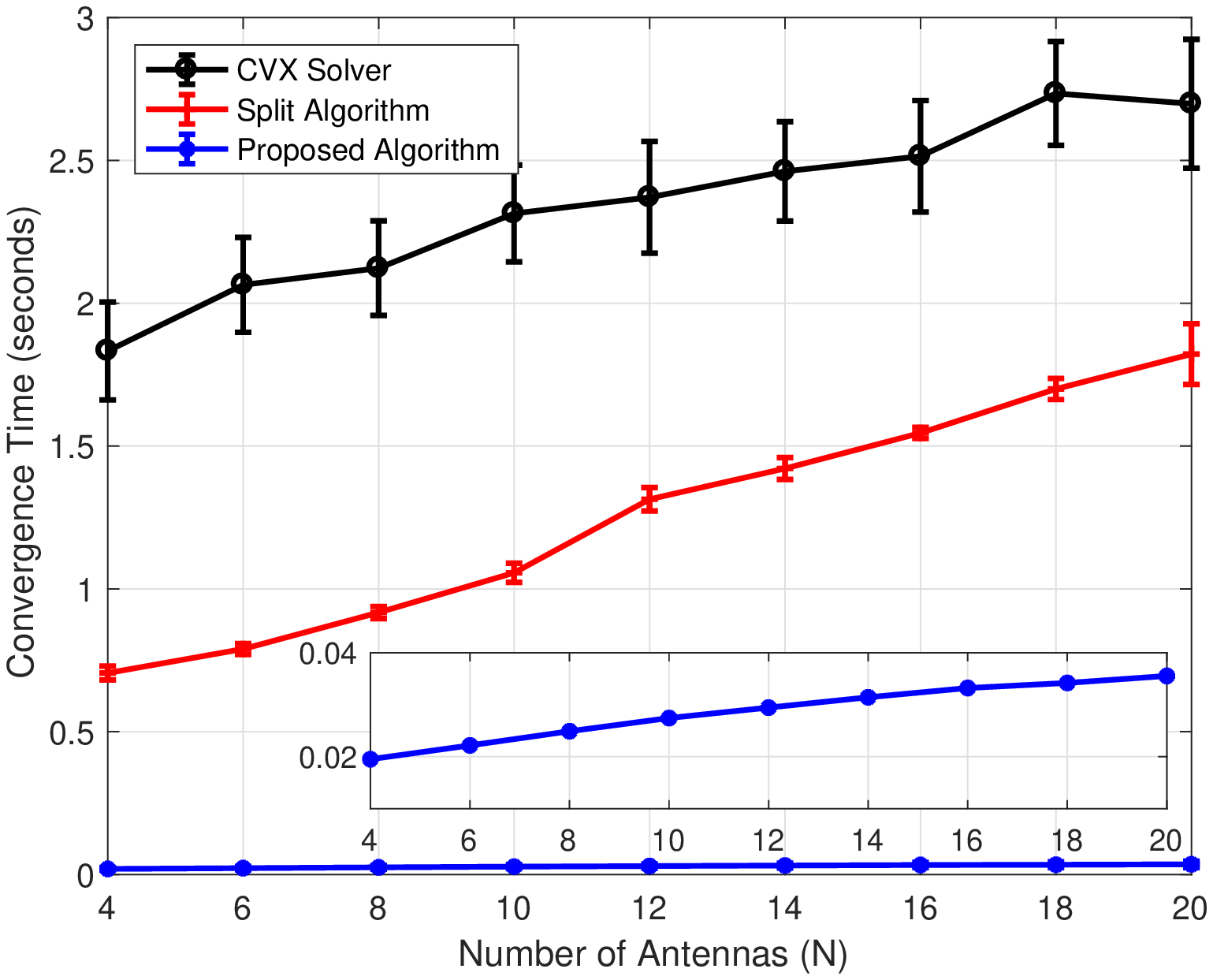}}}
\caption{Proposed primal-dual algorithm converges several orders of magnitude faster than the CVX solver and the split algorithm for NxNxN MIMO system ($P_s = 30$ dBm)}
\label{fig4}
\end{center}
\end{figure}

We compare our primal-dual algorithm, which uses the ellipsoid method for updating the dual variables and closed-form expressions to update the primal variables, with the convex standard solver CVX. The ellipsoid algorithm uses modest storage and computation per step, $O(n^2)$. CVX on the other hand uses the default SDPT3 solver and a heuristic successive approximation method which is several times slower due to its iterative approach~\cite{cvx}, and has a complexity cost of $O(n^3)$~\cite{Alizadeh1998}. Figure~\ref{fig4} shows the stark difference in the convergence time of our proposed algorithm and that of the numerical solver, CVX. We present a zoomed in version for our algorithm to show that it requires only a hundredth fraction of time compared to CVX. For both the solver and the proposed algorithm, the convergence time increases as the number of antennas in the MIMO system are increased. \textcolor{blue}{We also compare the performance to the approach in~\cite{Alouini2016}, where the rate optimization problem for both hops is solved separately as a split problem, and an iterative grid search is used to find the uniform power splitting ratio. The convergence time for the split-algorithm is several times more than our proposed algorithm because of the iterative grid search, however, it is faster than the CVX solver since we implement it using an ellipsoid algorithm as proposed in~\cite{Rui2013}.}

Figure~\ref{fig3} shows a comparison between using optimal non-uniform power splitting and the traditional uniform power splitting for source transmit powers of 25 dBm and 40 dBm. For the higher transmit power of 40 dBm, both schemes deliver comparable performance with gradual increase in throughput as the number of antennas (N) are increased in the NxNxN MIMO system. For the lower transmit power of 25 dBm, the rate gain from non-uniform power splitting is significantly pronounced. This result shows that non-uniform power splitting is beneficial at low source transmit power.
\begin{figure}[t]
\begin{center}
\includegraphics[scale=0.55]{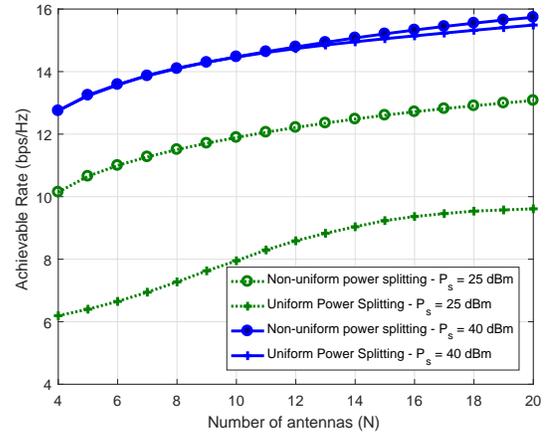}
\caption{Proposed non-uniform power splitting achieves higher rate than uniform power splitting for NxNxN MIMO system at low source transmit power ($N_0 = -100$ dBm)}
\label{fig3}
\end{center}
\end{figure}
\section{Conclusion}
In this letter we formulated a novel optimization problem to solve for the achievable rate of a wirelessly powered MIMO relay system. We characterized the Lagrangian dual problem and designed a primal-dual algorithm for jointly optimizing the source and relay precoders and the relay power splitting ratios. Our analysis showed that the optimal throughput for the MIMO relay channel is achieved using non-uniform power splitting and variable power allocation; and these parameters are interdependent and are affected by the channel eigen-modes. Numerical analysis showed that our proposed algorithm is efficient and runs faster than the CVX solver \textcolor{blue}{and existing iterative algorithms} by several orders of magnitude. Further comparison with standard uniform power splitting revealed that non-uniform power splitting achieves significantly higher rate at low source transmit power, and the achievable rate increases with increased number of antennas in the MIMO system. 
%----------------------------------------------------------------------------------------
%----------------------------------------------------------------------------------------
\bibliographystyle{./IEEEtran}
\bibliography{./letterbib}
%------------------------------------------------------------------------------------
\end{document}